\documentclass[11pt,a4paper]{article}
\usepackage[T1]{fontenc}
\usepackage[utf8]{inputenc}
\usepackage{lmodern}
\usepackage[margin=25mm,headheight=14pt]{geometry}
\usepackage{amsmath,amssymb,amsthm,bm}
\usepackage{graphicx,booktabs,array,tabularx}
\usepackage{microtype}
\usepackage[numbers,sort&compress]{natbib}

\usepackage{url}
\usepackage[hidelinks,pdfencoding=auto]{hyperref}
\usepackage[font=small,labelfont=bf]{caption}
\usepackage{placeins}
\hypersetup{pdftitle={Beta-Oslo and Shape analyses with unresolved parent-state mixtures},pdfauthor={Yangping Shen},pdfsubject={Statistical nuclear gamma decay, mixed beta-decaying parents, identifiability}}
\newcolumntype{Y}{>{\raggedright\arraybackslash}X}
\newcommand{\T}{\mathcal{T}}
\newcommand{\W}{\mathcal{W}}
\newcommand{\dd}{\mathrm{d}}
\newcommand{\TV}{d_{\mathrm{TV}}}
\newcommand{\KL}{D_{\mathrm{KL}}}
\newcommand{\rb}{^{90}\mathrm{Rb}}
\newcommand{\sr}{^{90}\mathrm{Sr}}

\newtheorem{proposition}{Proposition}

\numberwithin{equation}{section}
\title{\vspace{-1.2cm}$\boldsymbol{\beta}$-Oslo and Shape analyses with\protect\\ unresolved parent-state mixtures}
\author{Yangping Shen\thanks{\href{mailto:ypshen@ciae.ac.cn}{ypshen@ciae.ac.cn}}\\[3pt]
\small Department of Nuclear Physics, China Institute of Atomic Energy,\\[-2pt]
\small Beijing 102413, China}
\date{20 September 2026}
\begin{document}
\maketitle
\begin{abstract}
An unresolved mixture of beta-decaying ground and isomeric states populates a common daughter nucleus through different excitation-energy and spin-parity distributions. A framework is developed to determine when such data admit conventional $\beta$-Oslo and Shape analyses and which quantities remain identifiable. The physical primary-$\gamma$ distribution is an average of separately normalized branching kernels; the global parent fraction does not, in general, equal the fraction in an excitation-energy bin or an accepted primary-$\gamma$ sample. Population invariance, Oslo factorization, and recovery of the total nuclear level density are separated. An exact log-linear compatibility test and a lower-cascade replacement identity isolate factorization and first-generation-subtraction effects. For Shape analysis, matched final-state spins and parities remove common angular-momentum factors but can retain excitation-dependent E1/M1 weights. A graph representation gives consistency conditions and exposes the information lost when only one fixed final-state separation is measured. Deterministic spin-parity-resolved cascade benchmarks reproduce the population-invariant limit and exhibit both exact factorization with an effective, rather than total, level density and small matrix residuals accompanied by appreciable shape errors. Evaluated $\rb\rightarrow\sr$ properties specify a relevant application, including two $2^+$ candidate final states and time-dependent parent mixing. The numerical examples are controlled synthetic tests, not an extraction from $^{90}$Rb data. The results provide criteria for interpreting unresolved-parent measurements and for identifying the additional observations required for population-aware inference.
\end{abstract}

\section{Introduction}\label{sec:intro}
The Oslo method extracts a nuclear level density (NLD) and a $\gamma$-ray transmission coefficient from the excitation-energy dependence of primary-$\gamma$ spectra~\cite{Schiller2000}. Its experimental analysis separates the first transition from subsequent cascade radiation before fitting the normalized primary matrix~\cite{Guttormsen1987}. The $\beta$-Oslo extension uses beta decay to populate the daughter nucleus, thereby making statistical $\gamma$-decay information accessible for nuclei for which conventional reaction measurements are difficult~\cite{Spyrou2014}. The quantities inferred in this way can constrain reaction calculations, provided that their normalization and the relation between the measured and relevant spin-parity ensembles are established.

A beta-decaying sample may contain both a ground state and a long-lived isomer. The daughter isotope can then be unambiguous while the parent state remains unresolved for individual events. Both components probe the same set of daughter levels, but their beta-strength distributions and angular-momentum selection rules need not agree. Even a precisely determined integrated isomeric fraction leaves open how the mixture varies with excitation energy and how much of the daughter level space is sampled. The resulting problem involves both the validity of an analysis model and the information available to determine its parameters.

Population dependence is already a recognized limitation of statistical $\gamma$-decay analysis. Systematic studies of the Oslo method have examined restricted spin ranges, parity imbalance, and the assumptions entering primary-$\gamma$ subtraction~\cite{Larsen2011}. Zeiser et al. developed a forward-simulation correction for restricted-spin population using a reaction-specific application~\cite{Zeiser2019}. An unresolved-parent treatment should therefore build on, rather than replace, the established population-dependent framework. Linear addition of parent contributions is a starting identity; it does not by itself establish unbiased extraction or uniqueness of the inferred NLD and strength.

The Shape method provides complementary constraints through primary transitions to selected discrete final levels~\cite{Wiedeking2021}. Its use with beta decay has also been developed to constrain partial NLDs~\cite{Muecher2023}. Matching the final-state spins and parities is particularly useful because the associated initial-spin selection windows coincide. The role of the initial ensemble nevertheless requires care when E1 and M1 contributions coexist. The different effective strength functions obtained from two parent-state selections in the recent $^{70}$Zn study illustrate the importance of accessible final-state parities; that work interprets the difference without requiring a violation of the generalized Brink--Axel hypothesis~\cite{Ronning2026}.

This paper treats unresolved parent populations as a conditional-probability and inverse problem. Three questions are kept distinct: whether a normalized primary spectrum changes when the mixture changes; whether that spectrum belongs to the conventional Oslo factorized family; and whether a factorization represents the total NLD and an intrinsic strength function. The treatment also separates the true primary matrix from the output of a first-generation estimator. Analytical statements are tested with reproducible, deterministic cascade calculations in which the input physics is known. Evaluated $\rb$ and $\sr$ properties motivate a specific application, but no experimental matrix, detector-response simulation, or empirical $^{90}$Sr NLD or strength extraction is presented.

\section{Statistical \texorpdfstring{$\gamma$}{gamma} decay from mixed parent populations}\label{sec:model}
\subsection{Parent fractions and excitation-energy-dependent feeding}
Let $a$ label the parent state and let $\eta_a$ denote its fraction of beta decays in a specified observation interval and feeding domain, with $\sum_a\eta_a=1$. This fraction is not, in general, an implanted-ion fraction. Let $B_a(E_i)$ be the normalized daughter excitation-energy distribution per such beta decay, and $q_a(s|E_i)$ the conditional probability for the spin-parity class $s=(J_i,\pi_i)$. An incoherent mixture gives the expected number of decays into an interval $\dd E_i$ as
\begin{equation}
\dd N(E_i)=N_\beta\sum_a\eta_a B_a(E_i)\dd E_i.
\end{equation}
The local parent fraction and mixed class distribution are consequently
\begin{align}
w_a(E_i)&=\frac{\eta_a B_a(E_i)}{\sum_b\eta_b B_b(E_i)},\label{eq:localw}\\
q_{\rm mix}(s|E_i)&=\sum_a w_a(E_i)q_a(s|E_i).\label{eq:qmix}
\end{align}
These expressions are defined where the denominator is positive. A weak integrated component can dominate a particular excitation interval if the other parent has little feeding there. Conversely, an uncertainty in $\eta_a$ can be nearly irrelevant in a region populated by only one component.

The $B_a$ in Eq.~\eqref{eq:localw} includes all beta-transition classes retained in the feeding domain. It is not a gamma-cascade multiplicity distribution. If beta efficiencies differ through the endpoint energy, transition shape, or event selection, the accepted beta-feeding distribution must be used instead. An energy-migration matrix generally prevents this correction from being represented by a single multiplicative function of true $E_i$.

\subsection{Microscopic branching and statistical averaging}\label{sec:micro}
Consider first a daughter level $\nu$ with excitation energy in bin $E_i$. For prompt radiative decay with negligible competing channels, its branching probability to a level $f$ is
\begin{equation}
b_{\nu f}=\frac{\Gamma_{\nu f}}{\sum_{f'}\Gamma_{\nu f'}},
\qquad \sum_f b_{\nu f}=1.\label{eq:microbranch}
\end{equation}
If internal conversion or particle emission is relevant, the total width must include those channels and the observable must be conditioned on the corresponding event selection. The central derivations below concern a selected region in which prompt gamma decay supplies the relevant cascade.

Write the parent-dependent level population as
\begin{equation}
p_a(\nu|E_i)=q_a(s|E_i)h_a(\nu|s,E_i),
\qquad \sum_{\nu\in s}h_a(\nu|s,E_i)=1.
\end{equation}
The exact primary-energy kernel for that parent and class is
\begin{equation}
K_{a,s}(E_\gamma|E_i)=
\sum_{\nu\in s}h_a(\nu|s,E_i)
\sum_f b_{\nu f}\delta(E_\gamma-E_\nu+E_f).\label{eq:microkernel}
\end{equation}
Thus, a shared daughter does not automatically imply a shared coarse-grained $K_s$. Different beta operators can populate different levels within the same $E_i,J^\pi$ interval. Replacing $K_{a,s}$ by a parent-independent statistical kernel requires sufficiently representative averaging, or an explicit model of the residual population--decay correlations.

Under that statistical assumption, introduce the unnormalized continuum kernel
\begin{equation}
G_s(E_i,E_\gamma)=
\sum_{XL}\T_{XL}(E_\gamma)
\sum_{J_f,\pi_f}
\rho(E_i-E_\gamma,J_f,\pi_f)
S_{XL}(s;J_f,\pi_f),\label{eq:G}
\end{equation}
where $S_{XL}$ specifies the allowed transition classes and any adopted angular-momentum weights. Discrete final levels are added as individual terms rather than counted again in $\rho$. The convention
\begin{equation}
\T_{XL}(E_\gamma)=2\pi E_\gamma^{2L+1}f_{XL}(E_\gamma)
\end{equation}
is used. A common mean initial-level-spacing factor in the average partial widths cancels in the branching ratio. Define
\begin{equation}
Z_s(E_i)=\int G_s(E_i,E_\gamma)\dd E_\gamma,
\qquad K_s(E_\gamma|E_i)=\frac{G_s(E_i,E_\gamma)}{Z_s(E_i)}.\label{eq:normalizedK}
\end{equation}
Here $Z_s$ is the normalization of a transmission-density kernel; it is not a measured lifetime or a beta-population weight.

Equations~\eqref{eq:G}--\eqref{eq:normalizedK} are a mean-width approximation to Eq.~\eqref{eq:microkernel}. In a finite statistical ensemble, $\langle\Gamma_{\nu f}/\Gamma_\nu\rangle$ need not equal $\langle\Gamma_{\nu f}\rangle/\langle\Gamma_\nu\rangle$. Transition-width fluctuations~\cite{Porter1956} and their consequences for cascade spectra can be treated in level-resolved simulation, as implemented in RAINIER~\cite{Kirsch2018}. They are excluded from the deterministic benchmarks below so that population and normalization effects can be isolated.

\subsection{Order of normalization and mixing}
Under the shared-kernel approximation, the physical primary distribution is
\begin{equation}
\boxed{P_{\rm mix}(E_\gamma|E_i)
=\sum_s q_{\rm mix}(s|E_i)\frac{G_s(E_i,E_\gamma)}{Z_s(E_i)}.}\label{eq:master}
\end{equation}
Each initial class contributes one primary photon per selected decay before any gamma-energy cut. The alternative expression
\begin{equation}
\frac{\sum_s q_{\rm mix}(s|E_i)G_s(E_i,E_\gamma)}
{\sum_s q_{\rm mix}(s|E_i)Z_s(E_i)}\label{eq:wrong}
\end{equation}
instead weights normalized kernels by $q_sZ_s/\sum_tq_tZ_t$. It agrees with Eq.~\eqref{eq:master} only under additional conditions, for example equal $Z_s$ or identical normalized kernels. Without such conditions it changes the ensemble being sampled.

A two-bin example makes the distinction explicit. Let $G_g=(0.9,0.1)$, $G_m=3(0.1,0.9)$, and $q_g=q_m=1/2$. Equation~\eqref{eq:master} gives $(0.5,0.5)$; Eq.~\eqref{eq:wrong} gives $(0.3,0.7)$. The difference arises entirely from the order of normalization, despite the exact knowledge of the mixture. The correct result is recovered by summing unnormalized \emph{event counts} whose per-event decay probabilities have already been properly normalized.

\subsection{Conditioning on the analysis window and detector selection}
For a gamma-energy window $\W_i$, let
\begin{equation}
A_s(E_i)=\int_{\W_i}K_s(E_\gamma|E_i)\dd E_\gamma.
\end{equation}
The accepted class weight is
\begin{equation}
\widetilde q_s(E_i)=\frac{q_s(E_i)A_s(E_i)}{\sum_tq_t(E_i)A_t(E_i)},
\end{equation}
and the accepted normalized spectrum becomes
\begin{equation}
P_{\W}(E_\gamma|E_i)=\sum_s\widetilde q_s(E_i)
\frac{K_s(E_\gamma|E_i)}{A_s(E_i)},\quad E_\gamma\in\W_i.\label{eq:window}
\end{equation}
There is an analogous parent-level expression
\begin{equation}
\widetilde w_a=\frac{w_a A_a}{\sum_b w_b A_b},\qquad
A_a=\int_{\W_i}P_a(E_\gamma|E_i)\dd E_\gamma.\label{eq:acceptedw}
\end{equation}
These tilts are present even for an ideal detector if low-energy primary photons are excluded.

For a fixed, linear detector response without unresolved pileup, expected raw matrices add linearly. This property remains true when each parent is propagated through the response separately. It does not imply that an unfolded, row-normalized, or first-generation-extracted matrix is the same linear combination with the original integrated weights. All selection and migration should be placed at their physical position in the forward chain.

\subsection{A direct measure of mixture sensitivity}
For two fixed component spectra in a specified row and window, define $P(w)=(1-w)P_g+wP_m$. The total-variation distance obeys
\begin{equation}
\TV[P(w),P(v)]=|w-v|\,\TV(P_g,P_m),\qquad
\TV(p,q)=\frac12\int|p-q|\dd E_\gamma.\label{eq:tv}
\end{equation}
The proof follows by subtracting the two mixtures. This identity separates uncertainty in the local mixture from distinguishability of the component decay kernels. It also shows why an isomeric fraction alone cannot define a general tolerance. Equation~\eqref{eq:tv} bounds a change in the observable spectrum; an ill-conditioned inverse problem can amplify that change into a much larger parameter shift.

\section{Validity and identifiability of \texorpdfstring{$\beta$}{beta}-Oslo extraction}\label{sec:oslo}
\subsection{Population invariance and factorization are separate statements}
\begin{proposition}[Population invariance]\label{prop:invariance}
At fixed $E_i$, suppose all significantly populated classes have the same normalized kernel, $K_s=K_0$. Then the primary spectrum is independent of their population weights. Conversely, invariance under every possible probability distribution over those classes implies equality of their kernels.
\end{proposition}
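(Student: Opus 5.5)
The plan is to work directly from the master expression, Eq.~\eqref{eq:master}, at a fixed $E_i$. Write the primary spectrum for an arbitrary probability vector $q=(q_s)$ over the significantly populated classes $\mathcal S$ as $P_q(E_\gamma)=\sum_{s\in\mathcal S}q_sK_s(E_\gamma|E_i)$. The essential structural point is that the normalized kernels enter linearly. The order of normalization stressed around Eq.~\eqref{eq:wrong} is what makes this true: the weights are the class probabilities themselves and not $q_sZ_s/\sum_tq_tZ_t$. Classes with negligible population are dropped, or absorbed into a controlled remainder, so that "significantly populated" means that the statement concerns $\mathcal S$ only.

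For the direct part, substitute $K_s=K_0$ for all $s\in\mathcal S$. Factoring out the kernel gives $P_q=K_0\sum_sq_s=K_0$, because $\sum_sq_s=1$. The result does not depend on $q$, so it is independent of the parent mixture entering through $q_{\rm mix}$ in Eq.~\eqref{eq:qmix}, and hence of $\eta_a$ and $w_a(E_i)$. The same argument carries over to a window $\W_i$. With identical kernels, all $A_s$ coincide, so $\widetilde q_s=q_s$ and Eq.~\eqref{eq:window} reduces to $K_0/A_0$ on $\W_i$, which again does not depend on $q$.

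For the converse, I would use invariance only at the vertices of the probability simplex. Take $q=e_s$, the point mass on class $s$, which gives $P_{e_s}=K_s$. Invariance under every distribution then gives $K_s=P_{e_s}=P_{e_t}=K_t$ for all $s,t\in\mathcal S$, so every kernel equals a common $K_0$. If one prefers to admit only distributions with full support, for example because every class in $\mathcal S$ must remain significantly populated, then compare two interior points $q$ and $q'=q+\epsilon(e_s-e_t)$. Invariance gives $\epsilon(K_s-K_t)=0$, and hence $K_s=K_t$. Equalities are understood as equalities of measures, or almost everywhere in $E_\gamma$, which is appropriate since the kernels may contain $\delta$-terms from discrete final levels.

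I do not expect a real obstacle. The only delicate step is being precise about the domain of invariance and the sense of equality in the converse. That is why I would state it through the full-support perturbation, which works whenever at least two classes are present. I would also note, consistent with Eq.~\eqref{eq:tv}, that invariance under all mixtures is a much stronger hypothesis than invariance under the particular mixtures realized by a given experiment.
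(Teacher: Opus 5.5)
Your proof is correct and matches the paper's argument: the forward direction follows from $\sum_s q_s=1$, and the converse evaluates invariance at the point masses on each class in turn. The full-support perturbation $q+\epsilon(e_s-e_t)$ and the extension to the accepted-window spectrum are valid refinements that the paper's proof does not need.
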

\begin{proof}
The forward implication follows from $\sum_s q_s=1$. For the converse, choose distributions concentrated on each class in turn.
\end{proof}
For variations restricted to two specific parent ensembles, only $P_g=P_m$ is required; equality of every class kernel is a stronger sufficient condition. Accidental cancellations for one parent pair should not be generalized to arbitrary beta populations.

The usual Oslo model is
\begin{equation}
P^{\rm O}(E_\gamma|E_i)=
\frac{\rho(E_i-E_\gamma)\T(E_\gamma)}
{\int_{\W_i}\rho(E_i-E'_\gamma)\T(E'_\gamma)\dd E'_\gamma}.\label{eq:oslo}
\end{equation}
Population invariance does not establish that the common $K_0$ has this form. Conversely, a population-dependent spectrum can belong to this family with effective parameters. In particular, suppose a fixed initial class accesses a fraction $a(E_f)$ of the final-level density and the multipolarity dependence is separable. Then
\begin{equation}
P(E_\gamma|E_i)\propto
\underbrace{\rho_{\rm total}(E_f)a(E_f)}_{\rho_{\rm eff}(E_f)}\T(E_\gamma).\label{eq:partialfactor}
\end{equation}
The factorization is exact even when $a(E_f)$ is strongly energy dependent. A perfect fit therefore cannot establish that $\rho_{\rm eff}$ is the total NLD. This is the same physical concern that motivates population corrections in restricted-spin analysis~\cite{Zeiser2019}; Eq.~\eqref{eq:partialfactor} provides an explicit limiting case for mixed-parent tests.

\subsection{An exact compatibility test for a positive binned matrix}\label{sec:loglinear}
Consider an additive energy grid, with an observed positive normalized matrix $P_{ij}$ on a specified support $\Omega$. Let $f(i,j)$ be the final-energy index. Construct a design matrix $X$ with one row per observed cell and columns for a final-energy parameter $r_f$, a gamma-energy parameter $t_j$, and an initial-row constant $c_i$. Its action is
\begin{equation}
(X\theta)_{ij}=r_{f(i,j)}+t_j+c_i.\label{eq:design}
\end{equation}
\begin{proposition}[Binned factorization compatibility]\label{prop:factorization}
On a positive support with each row normalized over that support, a matrix admits the discrete form of Eq.~\eqref{eq:oslo} if and only if $\log P$ belongs to the column space of $X$.
\end{proposition}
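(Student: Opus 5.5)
The plan is to prove both directions by taking logarithms of the discrete form of Eq.~\eqref{eq:oslo} on the support $\Omega$, so that the factorization becomes a statement about an additive decomposition. The discrete model reads $P_{ij}=\rho_{f(i,j)}\T_j/N_i$ with
\[
N_i=\sum_{j':(i,j')\in\Omega}\rho_{f(i,j')}\T_{j'},
\]
where $\rho_f>0$ and $\T_j>0$ are free positive parameters on the final-energy and gamma-energy grids that actually appear in $\Omega$. Positivity of $P$ on $\Omega$ is what makes $\log P$ well defined. It also forces the parameters to be positive on the relevant indices: the discrete form is only meaningful there.

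\textbf{Forward direction.} If $P$ admits the discrete form, take logarithms cellwise. This gives
\[
\log P_{ij}=\log\rho_{f(i,j)}+\log\T_j-\log N_i,
\]
which is exactly $(X\theta)_{ij}$ with $r_f=\log\rho_f$, $t_j=\log\T_j$ and $c_i=-\log N_i$. Hence $\log P$ lies in the column space of $X$.

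\textbf{Converse.} Suppose $\log P=X\theta$ for some $\theta=(r,t,c)$. Define $\rho_f=e^{r_f}$ and $\T_j=e^{t_j}$, both positive. Then $P_{ij}=\rho_{f(i,j)}\T_j e^{c_i}$ on $\Omega$. Summing over the row support and using the hypothesis that each row of $P$ is normalized over $\Omega$ gives $1=e^{c_i}N_i$. So $e^{c_i}=1/N_i$, and $P$ coincides with the discrete Oslo form.

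\textbf{Main care point.} The only subtlety is that the row constant $c_i$ is not a free parameter of Eq.~\eqref{eq:oslo}; it is tied to $\rho$ and $\T$ through the normalization. The converse therefore genuinely needs the row-normalization hypothesis, and it enters only through $e^{c_i}=1/N_i$.

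I would also remark on two things that do not affect the equivalence but should be stated explicitly:
\begin{itemize}
\item Nonuniqueness of $\theta$: the familiar Oslo transformations $\rho\to Ae^{\alpha E_f}\rho$, $\T\to Be^{\alpha E_\gamma}\T$ lie in the null space of $X$, together with the absorbed shifts of $c$. This is possible because $E_f+E_\gamma=E_i$ on the additive grid.
\item Parameters tied to cells outside $\Omega$ are unconstrained.
\end{itemize}
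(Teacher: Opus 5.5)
Your proof is correct and matches the paper's argument: taking logarithms gives the forward direction, and in the converse you set $\rho_f=e^{r_f}$, $\T_j=e^{t_j}$ and use the row normalization to force $e^{c_i}$ to be the inverse partition sum. The paper adds only that a constant bin-width factor in the discretized Eq.~\eqref{eq:oslo} is absorbed into $\rho_f$; your bin-width-free discrete form already assumes this.
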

\begin{proof}
Taking logarithms of a factorized matrix gives Eq.~\eqref{eq:design}. In the reverse direction set $\rho_f=e^{r_f}$ and $\T_j=e^{t_j}$. The row sum of $P$ forces $e^{c_i}$ to be the inverse of the corresponding partition sum. A constant bin-width factor is absorbed into $\rho_f$.
\end{proof}
A log-space projection residual $(I-XX^+)\log P$ is therefore an exact algebraic diagnostic for noiseless positive matrices. Missing support can create additional null directions. Zero counts cannot be handled by simply taking logarithms; a count likelihood or a justified regularization is needed. Nor is the squared projection residual automatically a chi-squared statistic, because row normalization and preceding analysis steps induce covariance.

This criterion concerns representability, not nuclear interpretation. It supplies a useful separation between a genuinely nonfactorizable matrix and a factorized matrix whose parameters differ from the intended total quantities. Section~\ref{sec:numerical} exhibits both cases.

\subsection{First-generation subtraction: a separate population condition}\label{sec:fg}
Let $H_{is}(j)$ denote the expected number of photons in gamma bin $j$ emitted by a complete cascade starting in class $s$ of initial bin $i$. It is a photon-yield spectrum, not a unit-normalized probability distribution. The directly populated spectrum is $F_i=\sum_s q_{is}H_{is}$. Let $b_i(f,t)$ be the joint probability that the first transition ends in final-energy bin $f$ and class $t$, after averaging over the initial population. Then
\begin{equation}
F_i=P_i+\sum_{f<i,t}b_i(f,t)H_{ft},\qquad
b_i(f)=\sum_t b_i(f,t).\label{eq:cascade}
\end{equation}
The true primary spectrum $P_i$ is unit normalized when every cascade begins with one tracked photon.

Consider an \emph{oracle replacement} that knows the exact first-destination energy probabilities, but uses the measured, directly populated lower-bin spectrum as its cascade template:
\begin{equation}
\widehat P_i^{\rm or}=F_i-\sum_{f<i}b_i(f)F_f.
\end{equation}
Define the spectrum reached through the first transition as
\begin{equation}
H_f^{\rm cas|i}=\sum_t\frac{b_i(f,t)}{b_i(f)}H_{ft}
\end{equation}
for $b_i(f)>0$. Subtraction of Eq.~\eqref{eq:cascade} gives the exact identity
\begin{equation}
\boxed{\widehat P_i^{\rm or}-P_i=
\sum_{f<i}b_i(f)\bigl[H_f^{\rm cas|i}-F_f\bigr].}\label{eq:oracle}
\end{equation}
Equality of the directly and cascade-populated average spectra is sufficient for this residual to vanish. Different class distributions need not cause a residual if their cascade kernels are identical; cancellations between terms are also possible. The condition is thus about average decay spectra, not merely equality of spin populations.

The usual first-generation procedure estimates subtraction weights iteratively and includes its own normalization conventions~\cite{Guttormsen1987,Larsen2011}. Equation~\eqref{eq:oracle} is a controlled diagnostic of its physical template assumption, not a substitute implementation or a prediction of the bias of a particular software package. An oracle residual can be signed and need not sum to zero. Negative extracted bins are possible before any positivity enforcement. This also establishes why linearity of true event or primary counts does not imply that mixture formation commutes with the complete first-generation estimator.

\subsection{Normalization freedom and unobserved level space}\label{sec:identifiability}
Even an exact conventional factorization retains the familiar invariance~\cite{Schiller2000}
\begin{equation}
\rho'(E_f)=A e^{\alpha E_f}\rho(E_f),\qquad
\T'(E_\gamma)=B e^{\alpha E_\gamma}\T(E_\gamma).\label{eq:gauge}
\end{equation}
Its product changes by the row factor $ABe^{\alpha E_i}$. Known parent fractions do not remove these three freedoms. Discrete levels, strength-shape information, and radiative-width or other absolute data constrain different combinations; each constraint must be identified explicitly.

There are additional, physical null directions. If a final spin-parity class is inaccessible from every populated initial class through the retained multipolarities, its density can be changed without affecting the primary matrix or its branching denominators. For example, dipole decay from $J_i\leq4$ does not reach $J_f\geq6$. A primary-only analysis of an allowed Gamow--Teller (GT) mixture of the two $^{90}$Rb parent spins therefore cannot determine that high-spin density. Subsequent cascade observations or higher multipolarities can change the accessible space, so this example is a statement about the specified observable.

Parity and multipolarity also produce degeneracies. In a reduced model with negative-parity initial states and the same spin weights for both final parities, the continuum kernel has the form
\begin{equation}
G(E_i,E_\gamma)=\T_{E1}(E_\gamma)\rho_+(E_f)
+\T_{M1}(E_\gamma)\rho_-(E_f).\label{eq:paritydeg}
\end{equation}
Interchanging $\rho_+\leftrightarrow\rho_-$ and $\T_{E1}\leftrightarrow\T_{M1}$ leaves the observable unchanged. Known discrete-state parities, independently determined strength components, or further population-sensitive information can break this particular symmetry. The example suffices to show that a known parent mixture does not, by itself, separate the two strength components.

Consequently, an extraction should specify whether it reports $\rho_{\rm total}$, a defined spin-parity partial density, or an effective density within Eq.~\eqref{eq:oslo}. Similarly, a parent-selected effective strength should not automatically be equated with the intrinsic sum $f_{E1}+f_{M1}$. The target parameters and the prior assumptions needed to connect them to the measured matrix are part of the physical result.

\section{Shape analysis with mixed spin-parity populations}\label{sec:shape}
\subsection{Primary transitions to matched discrete final levels}
Let $N_j(E_i)$ denote the efficiency-corrected primary yield from an initial bin to a particular discrete final level of energy $E_j$. In the mean-width description, the yield is
\begin{equation}
N_j(E_i)\propto\sum_s\frac{q_{\rm mix}(s|E_i)}{Z_s(E_i)}
\sum_{XL}\T_{XL}(E_i-E_j)S_{XL}(s;j).\label{eq:diagonal}
\end{equation}
The proportionality factor includes the number of initial decays in the bin. The discrete-level sum, rather than a continuum NLD, enters this expression. If a diagonal window contains several final levels, their number, assignments, and selection weights must be included explicitly.

For two individual final levels with the same $J_f^\pi$, the angular-momentum and parity windows coincide. Cancellation of a common population factor is the basis of the matched-state construction in the Shape method~\cite{Wiedeking2021}. Such cancellation also requires that the adopted strength description represents both final states: additional state-specific reduced strengths or substantial finite-sample fluctuations need not cancel. A narrow energy interval containing only a few populated initial levels is therefore not equivalent to a statistical average, even when the final-state assignments match.

For two $2^+$ levels and dipole transitions, negative-parity initial states with $J_i=1,2,3$ contribute through E1 and positive-parity initial states with the same spins through M1. In the simplest equal reduced-strength convention,
\begin{align}
A_{E1}(E_i)&=\sum_{J=1}^{3}\frac{q_{\rm mix}(J,-|E_i)}{Z_{J,-}(E_i)},\\
A_{M1}(E_i)&=\sum_{J=1}^{3}\frac{q_{\rm mix}(J,+|E_i)}{Z_{J,+}(E_i)}.\label{eq:Aweights}
\end{align}
Any retained spin-dependent angular factors are to be included in these sums. Equal positive- and negative-parity population probabilities alone do not imply equal $A_{E1}$ and $A_{M1}$, because the branching denominators can differ.

After division by the dipole phase-space factors, the diagonal ratio is
\begin{equation}
R_{12}(E_i)\equiv\frac{N_1(E_i)}{N_2(E_i)}
\left(\frac{E_i-E_2}{E_i-E_1}\right)^3
=
\frac{A_{E1}(E_i)f_{E1}(E_i-E_1)+A_{M1}(E_i)f_{M1}(E_i-E_1)}
{A_{E1}(E_i)f_{E1}(E_i-E_2)+A_{M1}(E_i)f_{M1}(E_i-E_2)}.
\label{eq:shaperatio}
\end{equation}
This expression preserves the same final-spin window while displaying the residual multipolarity dependence. E2 or other contributions require additional terms and different powers of $E_\gamma$.

\subsection{Intrinsic and effective strength shapes}
When one multipolarity dominates, Eq.~\eqref{eq:shaperatio} constrains the shape of that component. When
\begin{equation}
r(E_i)=A_{E1}(E_i)/A_{M1}(E_i)=r_0
\end{equation}
is constant over the analyzed rows, the ratios instead probe the fixed effective function
\begin{equation}
f_{\rm eff}(E_\gamma)=r_0 f_{E1}(E_\gamma)+f_{M1}(E_\gamma).\label{eq:feff}
\end{equation}
This equals the intrinsic sum only for the appropriate equal weighting, or for special proportional-component cases. Knowledge of a nonunit $r_0$ does not independently determine both component functions from one effective combination.

If $r(E_i)$ varies, a single scalar strength function is no longer implied by the physical model. Each row samples a different linear combination. Special cases remain possible: proportional E1 and M1 shapes, a restricted energy domain, or an underconstrained set of ratios can conceal the variation. A successful numerical sewing procedure alone cannot distinguish these cases. The relevant question is whether the available independent ratios are consistent with the claimed common function and whether their precision can resolve departures from it.

\subsection{Ratio graphs, consistency, and a fixed-separation ambiguity}\label{sec:graph}
For a candidate common positive strength $F(E_\gamma)$, put $z_v=\log F(E_{\gamma,v})$. Every corrected ratio between two gamma-energy points is an oriented edge $e=(u,v)$ with
\begin{equation}
y_e=\log R_e=z_u-z_v.
\end{equation}
Collecting the edges gives $y=Bz$, where $B$ is the graph incidence matrix. A graph with $n$ vertices and $c$ connected components has rank $n-c$. The data determine one log-strength offset per connected component only after additional information is supplied. A necessary and sufficient consistency condition for noiseless edge data is that their signed sum around every cycle vanish. In data with uncertainty, the cycle residuals and their joint covariance provide a test, without identifying the physical cause of a discrepancy.

A graph with no cycles admits a scalar representation for any set of edge values. Therefore excitation-dependent E1/M1 weighting need not produce an observable inconsistency in a sparse two-diagonal dataset. Additional final-state pairs, repeated measurements with different population selections, or time gates can create redundant constraints. Their consistency is informative only after common efficiency and statistical uncertainties are propagated.

There is also a continuous counterpart. For two fixed final energies separated by $\Delta=E_2-E_1$, all direct ratios relate gamma energies separated by the same $\Delta$. If $F$ reproduces those ratios, then
\begin{equation}
F'(x)=F(x)e^{h(x)},\qquad h(x+\Delta)=h(x)\label{eq:periodic}
\end{equation}
reproduces them as well on the measured domain. Smoothness by itself does not exclude a smooth periodic $h$. The raw ratio information therefore leaves more freedom than a single overall normalization unless additional restrictions connect the different chains of gamma energies.

This observation concerns the information in the ratios, not a claim that the practical Shape procedure is unusable. The interpolation and local-shape assumptions employed in sewing~\cite{Wiedeking2021} can select a solution. Their resolution scale and influence should be stated, particularly when a structure on the scale of $\Delta$ is discussed. The ambiguity of Eq.~\eqref{eq:periodic} also need not survive combination with an independent Oslo matrix. If that matrix has already fixed the strength shape up to $e^{\alpha E_\gamma}$, a matched pair with nonzero $\Delta$ can constrain $\alpha$ through its factor $e^{\alpha\Delta}$, provided the two analyses probe the same effective strength.

\subsection{Joint use with an Oslo matrix}
Shape and Oslo constraints derived from the same event list are statistically dependent. Some diagonal counts are contained in the primary matrix, while backgrounds, response corrections, and first-generation subtraction may affect both observables. A joint likelihood at the count or event level is preferable to multiplying likelihoods that treat these quantities as independent. If separate analyses are retained, shared resampling of the complete input data can estimate the cross-covariance. Existing Oslo uncertainty-propagation work provides a relevant statistical foundation~\cite{Midtbo2021}.

The appropriate joint target is a set of daughter properties together with population and response parameters. It is not a parent-abundance average of two already extracted strength functions. Separately obtained estimators of the same well-defined quantity may, of course, be combined with their covariance; that statistical operation is distinct from physically mixing different parent-selected effective strengths.

\section{Controlled numerical validation}\label{sec:numerical}
\subsection{A reproducible finite cascade model}\label{sec:toy}
The analytical results were tested with an exactly summed, finite Markov cascade model. No random sampling is used: each reported spectrum is the expected photon yield for the specified transition probabilities. This choice isolates the structural effects from counting noise, width fluctuations, and convergence fluctuations of Monte Carlo sampling. It supplies controlled counterexamples and implementation tests rather than an uncertainty budget for a particular experiment.

The excitation grid is $E_n=1.0+0.2n$ MeV, $n=0,\ldots,27$. Each cell has spin classes $J=0,\ldots,6$ and both parities. The $1.0$-MeV cell is an artificial absorbing manifold with these same class labels, not the $0^+$ nuclear ground state. Photons emitted below this boundary are outside the model. Initial excitation is assumed known, rather than reconstructed from an incomplete sum of the tracked photons. Keeping this boundary separate from the real discrete spectrum avoids imposing artificial nuclear assignments on the benchmark.

All models use
\begin{align}
\rho(E)&=\rho_*\exp[(E-1.0\,\mathrm{MeV})/(0.9\,\mathrm{MeV})],\label{eq:toyrho}\\
g_J(E)&=\frac{(2J+1)e^{-(J+1/2)^2/[2\sigma^2(E)]}}
{\sum_{K=0}^{6}(2K+1)e^{-(K+1/2)^2/[2\sigma^2(E)]}},\\
\frac{f_{E1}(x)}{f_*}&=1+0.15\frac{x}{\mathrm{MeV}},\qquad
\frac{f_{M1}(x)}{f_*}=0.15+0.9e^{-x/(0.8\,\mathrm{MeV})}.\label{eq:toystrength}
\end{align}
The arbitrary common scales $\rho_*$ and $f_*$ cancel in the branching probabilities. The final-state density is $\rho(E)g_J(E)p_\pi(E)$. E1 and M1 transitions obey $|J_i-J_f|\leq1$ with $0\to0$ excluded; E1 changes parity and M1 preserves it. Each allowed bin-to-bin weight is proportional to $E_\gamma^3 f_X(E_\gamma)\rho(E_f,J_f,\pi_f)\Delta E$ and is normalized separately for each initial class.

The two parent feeding functions are discrete normalized Gaussians,
\begin{equation}
B_g(E)\propto e^{-(E-3.6)^2/(2\times0.85^2)},\qquad
B_m(E)\propto e^{-(E-5.2)^2/(2\times0.70^2)},\label{eq:toyfeeding}
\end{equation}
with energies in MeV and zero direct feeding of the absorbing cell. They are normalized over the remaining cells. The integrated mixture parameter $\eta_m$ is therefore the beta fraction within this synthetic feeding domain. It is scanned from zero to one in steps of $0.1$.

Table~\ref{tab:models} defines three models. In U and S, the ground-state-labelled parent feeds only $1^-$ and the isomer-labelled parent feeds $2^-,3^-,4^-$ with equal probability. These are allowed-GT-inspired test populations, not evaluated beta strengths. Model P adds positive-parity first-forbidden-inspired fractions
\begin{equation}
F_{{\rm FF},g}(E)=0.1+\frac{0.5}{1+e^{-(E-4.4)/0.4}},\qquad
F_{{\rm FF},m}(E)=0.1,\label{eq:toyff}
\end{equation}
again with energies in MeV. The positive-parity fractions are distributed equally among $J=0,1,2$ for $g$ and $J=1,\ldots,5$ for $m$. Such assignments specify a sensitivity test; they are not a calculation of first-forbidden matrix elements in $^{90}$Rb.

\begin{table}[tb]
\centering\small
\caption{Definitions of the synthetic models. Energies entering the expressions for $\sigma^2$ and $p_+$ are in MeV. All other inputs are common and given in Eqs.~\eqref{eq:toyrho}--\eqref{eq:toyff}.}\label{tab:models}
\begin{tabularx}{\linewidth}{@{}l l l X@{}}
\toprule
Model & $\sigma^2(E)$ & Positive-parity fraction $p_+(E)$ & Direct beta population\\
\midrule
U & $2.5$ & $1/2$ & Allowed-GT-inspired windows\\
S & $1.2+0.5(E-1)$ & $1/2$ & Same as U\\
P & $1.2+0.5(E-1)$ & $[1+0.8e^{-(E-1)/1.5}]/2$ & GT windows plus Eq.~\eqref{eq:toyff}\\
\bottomrule
\end{tabularx}
\end{table}

For each model, the same daughter transition array is used for every parent fraction. Full expected cascades follow by recursion from the absorbing boundary. The implementation verifies branch normalization, primary normalization, and the photon-energy balance down to $1.0$ MeV. Both parents thus sample exactly the same synthetic daughter rather than independently generated level schemes. Figure~\ref{fig:mixture} illustrates the local parent fractions and their additional tilt under a primary-gamma energy cut.

\begin{figure}[tb]
\centering\includegraphics[width=0.87\linewidth]{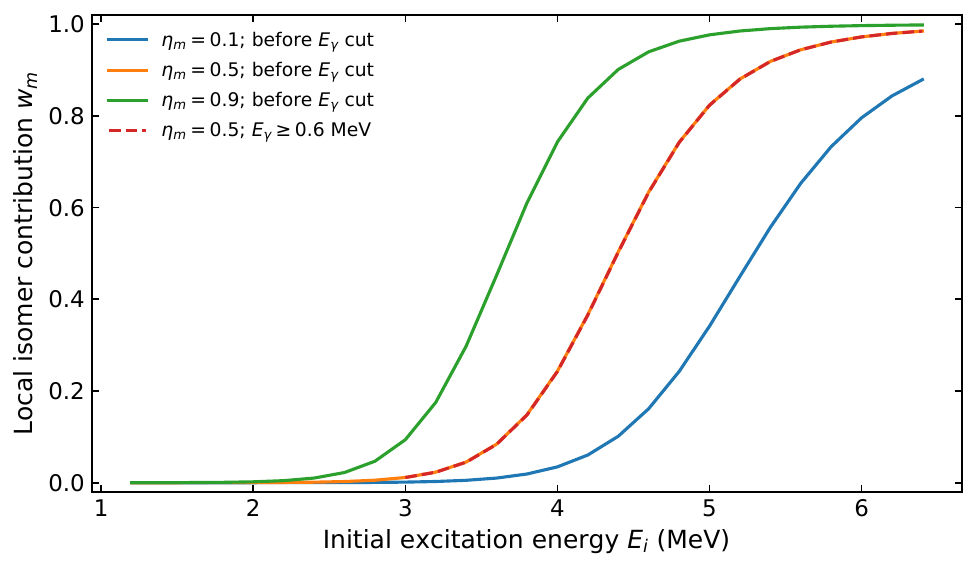}
\caption{Local isomer fraction for the synthetic feeding functions of Eq.~\eqref{eq:toyfeeding}, at integrated fractions $\eta_m=0.1$, $0.5$, and $0.9$. The additional curve conditions the $\eta_m=0.5$ model-P primary spectrum on $E_\gamma\geq0.6$ MeV. The integrated fraction, local beta fraction, and accepted primary fraction are different quantities.}\label{fig:mixture}
\end{figure}

\subsection{Primary-matrix fits and normalization conventions}\label{sec:fit}
The true primary matrices are fitted for $3.0\leq E_i\leq6.4$ MeV and $E_\gamma\geq0.6$ MeV. Every retained row is normalized over its accepted triangular support. A separate log NLD and log transmission parameter is assigned to each covered final-energy and gamma-energy bin. The objective is the equally weighted mean conditional Kullback--Leibler divergence,
\begin{equation}
\mathcal L=\frac{1}{N_i}\sum_i\KL(P_i\Vert P_i^{\rm O}),\qquad
\KL(p\Vert q)=\sum_j p_j\log\frac{p_j}{q_j}.\label{eq:klfit}
\end{equation}
This is the infinite-count conditional-multinomial objective with equal row weights. It is convex in the log parameters after removal of the gauge freedoms. A damped Newton method with an exact Hessian is used; the accepted maximum gradient is below $10^{-9}$. No counting-error bars or coverage estimates are inferred from these deterministic fits.

For comparison of the fitted functions with the known inputs, a single common gauge transformation is chosen to minimize the sum of squared log differences of both functions. Define the resulting root-mean-square errors as
\begin{equation}
\epsilon_\rho=\left[\frac{1}{N_f}\sum_f\delta r_f^2\right]^{1/2},\qquad
\epsilon_{\T}=\left[\frac{1}{N_\gamma}\sum_j\delta t_j^2\right]^{1/2}.\label{eq:gaugeerror}
\end{equation}
Here $(\delta r,\delta t)$ are the log errors after the joint least-squares removal of $(\log A,\log B,\alpha)$ in Eq.~\eqref{eq:gauge}. This is a synthetic, truth-referenced shape comparison, not an experimental normalization. In particular, a slope allocated between $\rho$ and $\T$ by this convention is not a separately measured strength-slope bias. The precise gauge projection is given in Appendix~\ref{app:compute}.

Matrix agreement is summarized by
\begin{equation}
\overline d_{\rm TV}=\frac{1}{N_i}\sum_i\TV(P_i,P_i^{\rm O}).
\end{equation}
The log-design matrix of Sec.~\ref{sec:loglinear} has 297 observed cells, 68 columns, and rank 65 for this support. The three-dimensional nullspace is the expected Oslo freedom; the corresponding row-normalized model has 47 independent shape parameters.

\subsection{Factorization with and without recovery of the total density}
Model U has energy-independent spin fractions and equal parity fractions. The accessible spin sums are initial-class-dependent constants, while the energy-dependent factor is common. These constants cancel on branch normalization. The maximum class-kernel difference is $2.2\times10^{-16}$, and every mixture is recovered to floating-point precision. This is an exact population-invariant check, including classes visited later in the cascade.

Model S retains parity balance but makes the spin distribution energy dependent. For the pure ground-state-labelled population, $J_i^\pi=1^-$, the continuum primary kernel is proportional to
\begin{equation}
\rho(E_f)\bigl[g_0(E_f)+g_1(E_f)+g_2(E_f)\bigr]
\bigl[\T_{E1}(E_\gamma)+\T_{M1}(E_\gamma)\bigr].\label{eq:Szero}
\end{equation}
The constant parity factor is immaterial. An exact Oslo representation therefore exists, but its NLD contains an energy-dependent accessible-spin fraction. The computed log-compatibility residual is $3.4\times10^{-15}$, whereas the truth-referenced joint-gauge errors are nonzero. This explicitly demonstrates the distinction between exact factorization and recovery of the total NLD.

\begin{table}[tb]
\centering\small
\caption{Selected deterministic results. $\overline d_{\rm TV}$ refers to fits of the true, accepted primary matrix. The log-shape errors use the joint truth-referenced gauge alignment of Eq.~\eqref{eq:gaugeerror}. $\overline\delta_{\rm FG}$ is the mean unthresholded photon-yield $L^1$ residual of the oracle replacement, Eq.~\eqref{eq:fgmetric}; it is not a normalized-spectrum distance. Numerical zeros for model U are below $10^{-12}$.}\label{tab:results}
\begin{tabular}{@{}lc r r r r@{}}
\toprule
Model & $\eta_m$ & $\overline d_{\rm TV}$ & $\epsilon_\rho$ & $\epsilon_{\T}$ & $\overline\delta_{\rm FG}$\\
\midrule
U & $0$--$1$ & $<10^{-12}$ & $<10^{-12}$ & $<10^{-12}$ & $<10^{-12}$\\
S & $0$ & $5.4\times10^{-12}$ & $0.0481$ & $0.0477$ & $0.0115$\\
S & $0.5$ & $0.01412$ & $0.1862$ & $0.2048$ & $0.0725$\\
S & $1$ & $0.00214$ & $0.1954$ & $0.1634$ & $0.0717$\\
P & $0$ & $0.00989$ & $0.0200$ & $0.0601$ & $0.0817$\\
P & $0.5$ & $0.01686$ & $0.1402$ & $0.1543$ & $0.1882$\\
P & $1$ & $0.00183$ & $0.1333$ & $0.1163$ & $0.0777$\\
\bottomrule
\end{tabular}
\end{table}

At intermediate mixtures, different normalized class kernels contribute with different energy-dependent weights. The standard model then generally fails the exact compatibility test. For S at $\eta_m=0.5$, the mean primary-matrix distance is only $0.0141$, although $\epsilon_\rho=0.186$ and $\epsilon_{\T}=0.205$. Model P at the same fraction has a mean distance of $0.0169$ and log-shape errors of $0.140$ and $0.154$. These numbers characterize the specified synthetic model and gauge convention; they do not establish numerical uncertainties for $^{90}$Sr. Figure~\ref{fig:factor} shows that the matrix distance need not vary monotonically with the integrated mixture fraction. Figure~\ref{fig:nld} displays the corresponding NLD shapes.

\begin{figure}[tb]
\centering\includegraphics[width=0.87\linewidth]{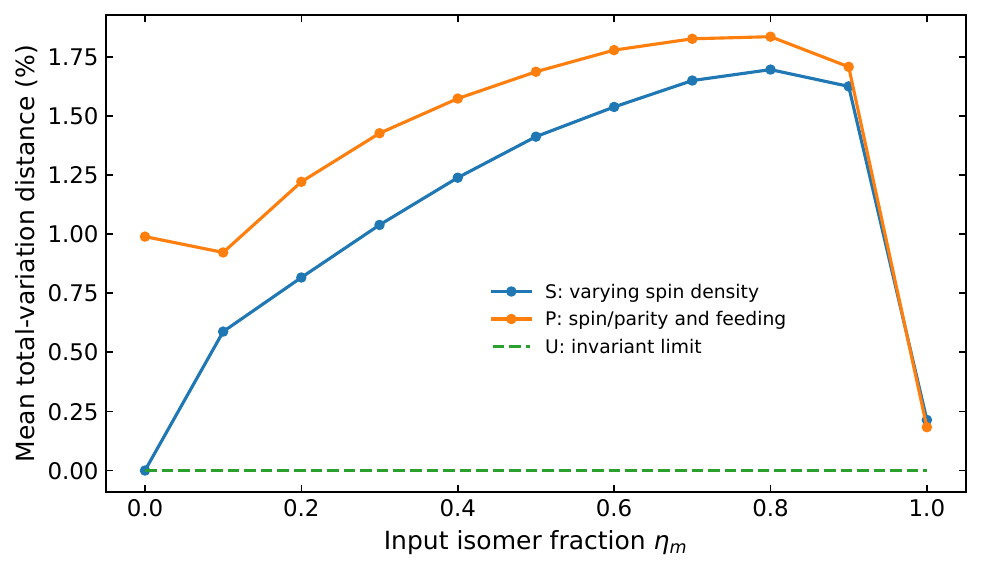}
\caption{Mean total-variation distance between the true accepted primary matrix and its best standard Oslo representation. Rows have equal weight, and the inputs other than the parent fraction are fixed within each model. Model U is factorized for every mixture to numerical precision. A small distance does not imply recovery of the total level density.}\label{fig:factor}
\end{figure}

\begin{figure}[tb]
\centering\includegraphics[width=0.87\linewidth]{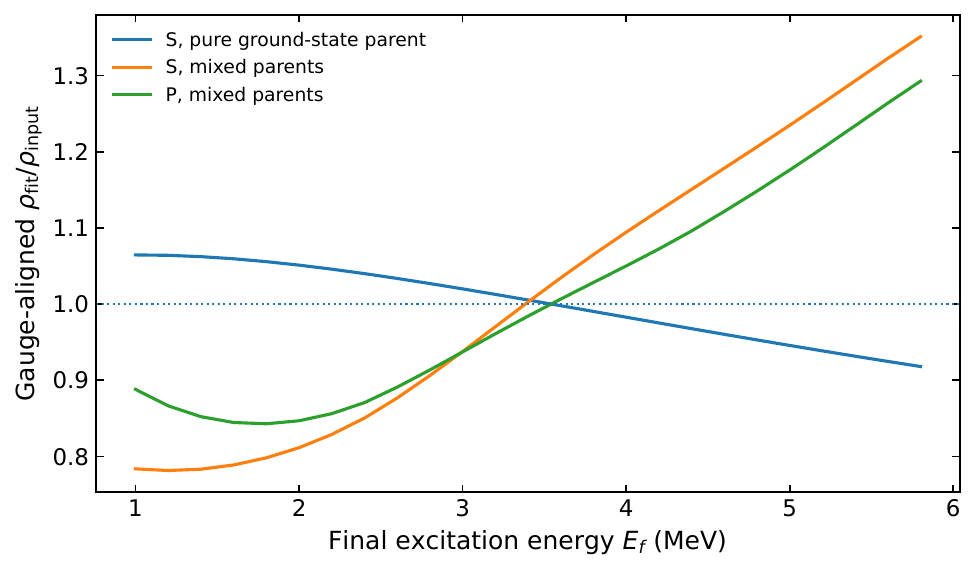}
\caption{Fitted NLD divided by the synthetic total input NLD after joint gauge alignment of the NLD and transmission errors. The two mixed-parent curves use $\eta_m=0.5$. The pure-ground-state model-S matrix has an exact Oslo factorization, yet its effective density differs from the total density. The displayed normalization uses the known synthetic truth and is not an experimental constraint.}\label{fig:nld}
\end{figure}

The small primary residual for the pure-isomer S case coexists with an appreciable effective-density discrepancy. This reinforces that goodness of fit and physical identifiability test different statements. Enlarging the range of directly fed spins can improve some aspects of coverage while changing the energy dependence of the accessible fraction; there is no general ordering in which pure or mixed populations are necessarily superior.

\subsection{Cascade-template replacement tests}
The complete cascade expectations are subjected to the oracle replacement in Eq.~\eqref{eq:oracle}. Its error is quantified without a gamma-energy cut or subsequent row normalization,
\begin{equation}
\delta_{\rm FG}(i)=\sum_j|\widehat P^{\rm or}_{ij}-P_{ij}|,\qquad
\overline\delta_{\rm FG}=\frac{1}{N_i}\sum_i\delta_{\rm FG}(i).\label{eq:fgmetric}
\end{equation}
The residual identity is verified independently to within $2\times10^{-12}$. Model U yields zero residual within floating-point precision. At $\eta_m=0.5$, the mean residual is $0.0725$ photons per initial decay for S and $0.1882$ for P. The maximum row residual for the latter is $0.2855$ (Fig.~\ref{fig:fg}). Even pure-parent samples can violate the lower-cascade template condition; the pure ground-state S case has a nonzero residual despite its exactly factorized true primary matrix.

\begin{figure}[tb]
\centering\includegraphics[width=0.87\linewidth]{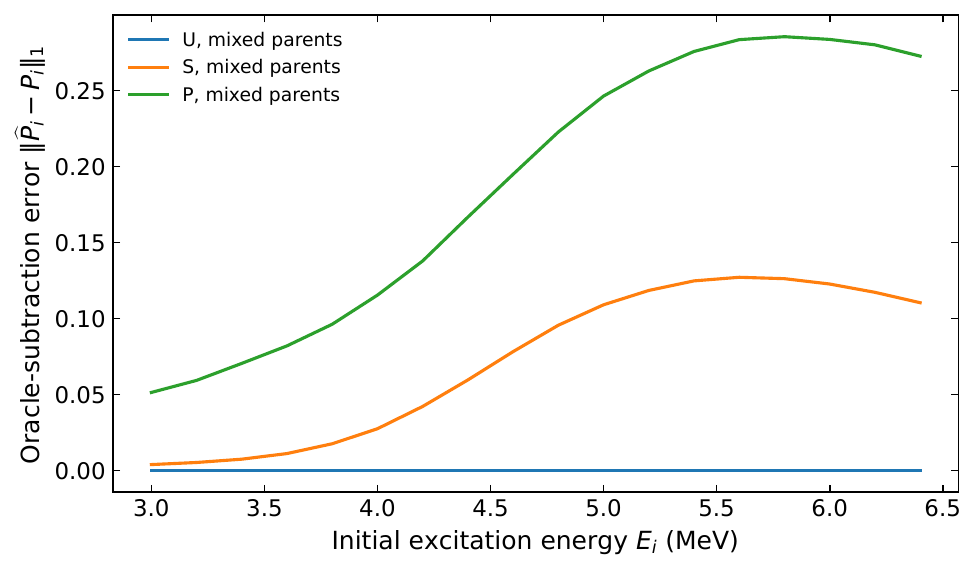}
\caption{Photon-yield $L^1$ residual of the oracle lower-cascade replacement for $\eta_m=0.5$. Exact first-destination energy probabilities are used. The curves isolate the mismatch between directly and cascade-populated lower-bin spectra; they are not results of running the iterative Oslo first-generation algorithm. All photons tracked above the artificial terminal boundary are included.}\label{fig:fg}
\end{figure}

For the pure-isomer S sample, the most affected row develops a total negative-bin weight of $0.0223$ in the signed oracle spectrum. The corresponding value for pure-isomer P is $0.00734$. Such signed outputs cannot be interpreted as normalized primary probabilities. Enforcing positivity or readjusting iterative weights may redistribute the discrepancy, but does not establish that the input template assumption has become valid.

The factorization fits in Table~\ref{tab:results} deliberately use the \emph{true} primary matrix. Fitting the signed oracle output would conflate two mechanisms and require arbitrary handling of negative bins. A detector-level analysis must assess both steps together, but their separate tests remain necessary for identifying the origin of an observed bias.

\subsection{Shape-weight and ratio-information tests}
Using the strength functions of Eq.~\eqref{eq:toystrength}, corrected ratios were calculated for final energies $0.83168$ and $1.89236$ MeV, the evaluated candidate $2^+$ energies in $^{90}$Sr~\cite{ENSDFSr90}. All strength functions and population coefficients in this calculation remain synthetic. Three cases were used: $r=1$, $r=4$, and
\begin{equation}
r(E_i)=0.25+\frac{3.75}{1+e^{-(E_i-4.8)/0.35}},\label{eq:toyr}
\end{equation}
with energies in MeV. Figure~\ref{fig:shapew} gives $\log R_{12}/\Delta$, a finite-difference slope of the corrected ratios, over $3.2\leq E_i\leq6.4$ MeV. It is a direct ratio diagnostic, not the output of a full sewing implementation. Different weights generate different apparent slopes even though the intrinsic component functions are identical in all three cases.

\begin{figure}[tb]
\centering\includegraphics[width=0.87\linewidth]{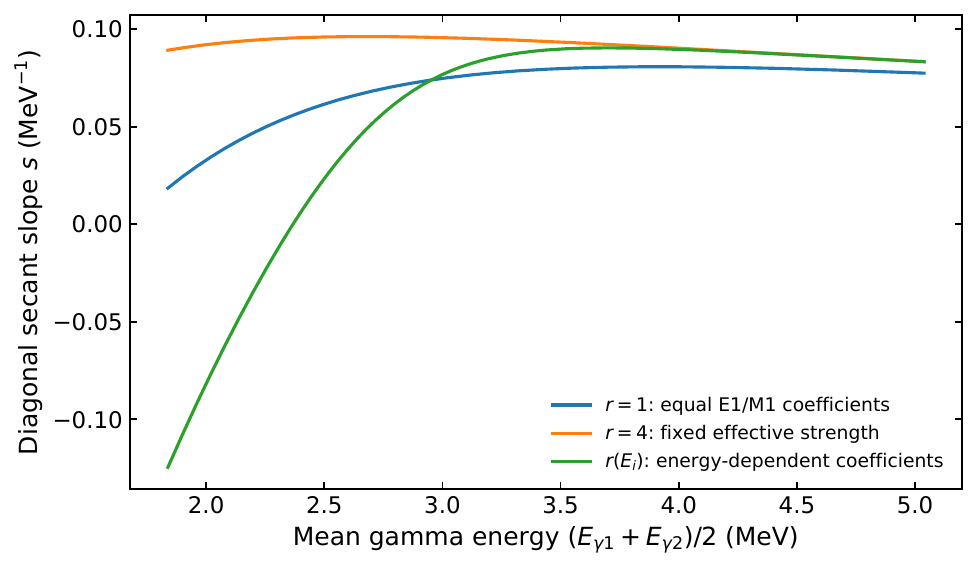}
\caption{Secant log-strength slopes inferred directly from the synthetic matched-state ratios. The horizontal coordinate is the mean of the two primary gamma energies, and $\Delta=1.06068$ MeV. Constant coefficients probe different fixed effective strengths; the variable coefficient of Eq.~\eqref{eq:toyr} samples a row-dependent combination. The figure does not assume that an unconstrained two-diagonal dataset can independently detect that variation.}\label{fig:shapew}
\end{figure}

A fixed-separation counterexample uses
\begin{equation}
F_0(x)=f_{E1}(x)+f_{M1}(x),\qquad
F_1(x)=F_0(x)\exp[0.35\sin(2\pi x/\Delta)].\label{eq:toynull}
\end{equation}
The pointwise ratio varies between approximately $0.705$ and $1.419$, whereas the two sets of fixed-separation ratios agree to a maximum relative difference of $2.7\times10^{-15}$ (Fig.~\ref{fig:null}). This example is an exact information-content test, not a proposed nuclear-strength parametrization.

\begin{figure}[tb]
\centering\includegraphics[width=0.87\linewidth]{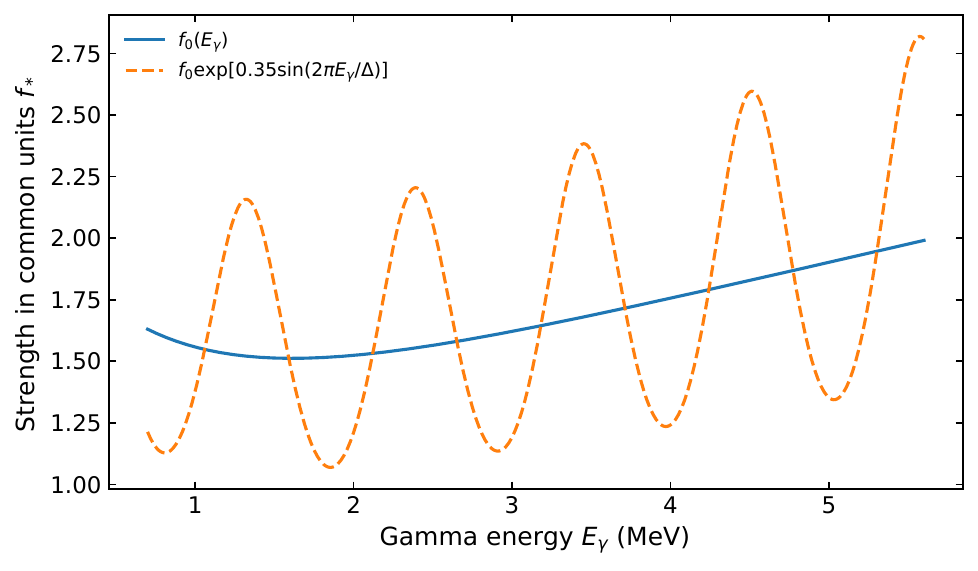}
\caption{Two positive synthetic strength functions with exactly the same ratios at separation $\Delta=1.06068$ MeV. Their logarithms differ by a periodic function. Additional energy-pair connections or shape restrictions are required to distinguish them using Shape ratios alone. Joint Oslo information can remove freedoms not present in the Oslo-compatible family.}\label{fig:null}
\end{figure}

A discrete graph test uses 21 gamma-energy nodes. Edges with index separation four produce 17 edges, four connected components, and rank 17. Adding edges of separation seven produces 31 edges, one connected component, and rank 20. The added ratios supply both inter-chain information and redundant consistency checks. These are generic grid examples, not assignments of additional $^{90}$Sr levels.

Finally, a redundant cycle was constructed using three \emph{fictitious}, identically assigned final levels at $0.8$, $1.8$, and $2.8$ MeV. Define $R(E_i;E_a,E_b;r)$ by Eq.~\eqref{eq:shaperatio}. The cycle
\begin{align}
\mathcal C={}&\log R(5.0;0.8,1.8;4)
+\log R(4.0;0.8,1.8;0.25)\nonumber\\
&-\log R(5.0;0.8,2.8;4)
\end{align}
is $-0.10293$. Setting all three coefficients to $r=1$ gives $\mathcal C=5.6\times10^{-17}$. The nonzero value demonstrates how cross-row redundancy can expose varying multipolarity weights. It does not imply that the single real candidate pair provides such a cycle.

\subsection{Scope of the numerical results}
The 33 model--fraction combinations, gauge conventions, and consistency tests are fully specified in the ancillary code. These calculations establish the behavior of the stated finite model and provide counterexamples to overly general extraction claims. They do not test a physical detector response, gamma unfolding, the full iterative first-generation algorithm, microscopic beta strengths, or Porter--Thomas fluctuations. The models also omit gamma decay below their artificial boundary. None of the reported numbers should therefore be used as a quantitative correction to $^{90}$Rb data without an experiment-specific forward calculation.

A realistic extension can retain the same hierarchy: true-primary compatibility, cascade-template consistency, and detector-level recovery. A level-realization Monte Carlo calculation should use the same daughter realization for both parent components, then repeat across realizations. Counting fluctuations and model misspecification should be varied independently. Increasing counts can reduce statistical uncertainty while leaving the structural biases identified here unchanged.

\section{The \texorpdfstring{$\rb\rightarrow\sr$}{90Rb to 90Sr} case and time-dependent mixtures}\label{sec:rb}
\subsection{Evaluated constraints and their role}
The $^{90}$Rb system provides two long-lived parents feeding one even-even daughter. Table~\ref{tab:nuclear} summarizes the evaluated quantities used here. Both parent states have negative parity, but their spins differ. Their beta-decay endpoints lie below the adopted $^{90}$Sr neutron separation energy, making a below-neutron-threshold radiative-decay treatment relevant to this case~\cite{ENSDFRb90,ENSDFSr90}. This kinematic observation does not establish statistical decay in every excitation interval; the density of populated states and the influence of discrete transitions must still be assessed.

\begin{table}[tb]
\centering\small
\caption{Evaluated inputs relevant to $^{90}$Rb parent mixing and the candidate $^{90}$Sr Shape pair. Parent data are from Ref.~\cite{ENSDFRb90} and daughter data from Ref.~\cite{ENSDFSr90}, both based on the March 2020 evaluation. Uncertainties in parentheses follow those adopted datasets. The energy-release and separation-energy values are quoted at the precision needed for the threshold comparison.}\label{tab:nuclear}
\begin{tabular}{@{}l l l@{}}
\toprule
System or quantity & Adopted property & Relevance\\
\midrule
$^{90}$Rb ground state & $J^\pi=0^-$; $T_{1/2}=158(5)$ s & Low-spin beta population\\
$^{90m}$Rb & $E_m=106.90(3)$ keV; $J^\pi=3^-$ & Higher-spin parent\\
$^{90m}$Rb half-life & $258(4)$ s & Temporal separation\\
$^{90m}$Rb decay branches & $\beta^-:97.5(4)\%$; IT: $2.5(4)\%$ & Feeding of ground-state activity\\
$Q_\beta(^{90}\mathrm{Rb}_{g})$ & $6.584(7)$ MeV & Ground-parent endpoint\\
$S_n(^{90}\mathrm{Sr})$ & Approximately $7.810$ MeV & Above both parent endpoints\\
$^{90}$Sr first candidate & $831.68(4)$ keV, $2_1^+$ & Matched final-state pair\\
$^{90}$Sr second candidate & $1892.36(4)$ keV, $2_2^+$ & $\Delta=1.06068$ MeV\\
\bottomrule
\end{tabular}
\end{table}

In the allowed Gamow--Teller limit, the rank-one spin operator gives
\begin{equation}
0^-\xrightarrow{\rm GT}1^-,\qquad
3^-\xrightarrow{\rm GT}2^-,3^-,4^-.\label{eq:GT}
\end{equation}
A $0^-$ daughter state is not part of the first GT window: a rank-one operator cannot connect $J=0$ to $J=0$. An allowed Fermi transition would instead have rank zero and require the appropriate isospin structure and an energetically accessible state. It should not be inserted into a GT population model solely from the equality of spin and parity.

First-forbidden transitions can populate opposite-parity states and therefore alter both the spin coverage and the E1/M1 coefficients in Eq.~\eqref{eq:Aweights}. The evaluated ground-parent decay scheme includes significant feeding to low-lying positive-parity levels, together with qualifications concerning the inferred feeding distribution~\cite{ENSDFBeta90}. Those low-energy branches establish that a purely allowed description of the entire decay is insufficient. They do not determine the first-forbidden fraction in the higher excitation window used for a statistical analysis. Accordingly, the two functions $F_{{\rm FF},a}(E_i)$ must be constrained separately or treated as population-model uncertainties.

The pair of evaluated $2^+$ levels is a suitable candidate for a matched-final-state test, subject to the visibility and purity of the primary diagonals. For negative-parity initial feeding it probes E1 transitions from $J_i=1,2,3$; positive-parity feeding contributes through M1 over the same spin set. In the GT limit the isomer-fed $J_i=4^-$ class cannot decay directly to either $2^+$ level by a dipole transition, although it contributes to other parts of the primary spectrum. The Shape and full-matrix observables consequently sample different subsets even before first-forbidden contributions are added.

The synthetic models in Sec.~\ref{sec:numerical} use these spin windows and, in the separate Shape test, the two final-state energies. Their continuum densities, strength functions, and beta-feeding profiles are otherwise constructed inputs. The calculation supports feasibility criteria and failure mechanisms, not a preferred empirical $^{90}$Sr strength curve or a numerical correction for an existing measurement.

\subsection{Activities, internal transitions, and observation windows}\label{sec:time}
Distinct half-lives offer information that is lost when all events are integrated. For a simple post-collection decay interval with no further implantation, let $N_g(0)$ and $N_m(0)$ be the parent inventories. With the adopted small isomeric internal-transition branch $b_{\rm IT}$,
\begin{align}
\dot N_m&=-\lambda_mN_m,\\
\dot N_g&=-\lambda_gN_g+b_{\rm IT}\lambda_mN_m.
\end{align}
Assuming the ground state decays by beta emission, the relevant activities are
\begin{equation}
A_g(t)=\lambda_gN_g(t),\qquad
A_m(t)=(1-b_{\rm IT})\lambda_mN_m(t).\label{eq:activities}
\end{equation}
Here an isomer that undergoes an internal transition and later beta decays contributes to the ground-parent beta spectrum. The solutions and integrals are given in Appendix~\ref{app:time}. Continuous implantation requires the corresponding source terms and cannot be reduced to the post-collection formula without accounting for the collection history.

Let $C_a$ be the expected, unnormalized detector-bin distribution per beta decay from parent $a$, including its feeding and fixed response. For observation window $\ell$,
\begin{equation}
C_\ell=h_{\ell g}C_g+h_{\ell m}C_m,
\qquad h_{\ell a}=\int_{t_\ell}^{t_{\ell+1}} A_a(t)L(t)\dd t,\label{eq:timeC}
\end{equation}
where $L(t)$ is the live-time fraction. Backgrounds and parent-dependent efficiencies are added or folded into the relevant response. When the inventories and timing parameters are constrained, the two column vectors of $H=(h_{\ell a})$ describe the available temporal separation. If the inventories are uncertain they become nuisance parameters in a joint fit; knowledge of the two half-lives alone does not fix $H$ completely.

Rank two is necessary for separating two arbitrary component spectra through Eq.~\eqref{eq:timeC}. Near-collinear columns, however, give a poorly conditioned separation and amplify noise. Row normalization before the fit discards amplitude information and introduces the excitation-dependent fractions of Eq.~\eqref{eq:localw}; the simple linear deconvolution should therefore use unnormalized counts. A joint fit need not assign a unique parent to every event.

\subsection{A temporal conditioning benchmark}
For illustration, equal initial inventories, unit live time and efficiency, and six equal-width observation windows were used with the evaluated central half-lives. Each column of $H$ was normalized to unit sum before calculating its Euclidean condition number. This convention measures similarity of the temporal shapes rather than their total amplitudes. Including $b_{\rm IT}=0.025$ gives
\begin{equation}
\begin{array}{c|rrrr}
T_{\rm obs}\ (\mathrm{s}) & 60 & 300 & 600 & 1200\\ \hline
\kappa(H_{\rm col}) & 72.18 & 16.13 & 10.54 & 9.62
\end{array}\label{eq:timevalues}
\end{equation}
for the specified six-window design. Figure~\ref{fig:time} also shows the result without the internal-transition branch. The large short-window condition number is consistent with the two exponential activities being nearly proportional over a brief interval.

\begin{figure}[tb]
\centering\includegraphics[width=0.87\linewidth]{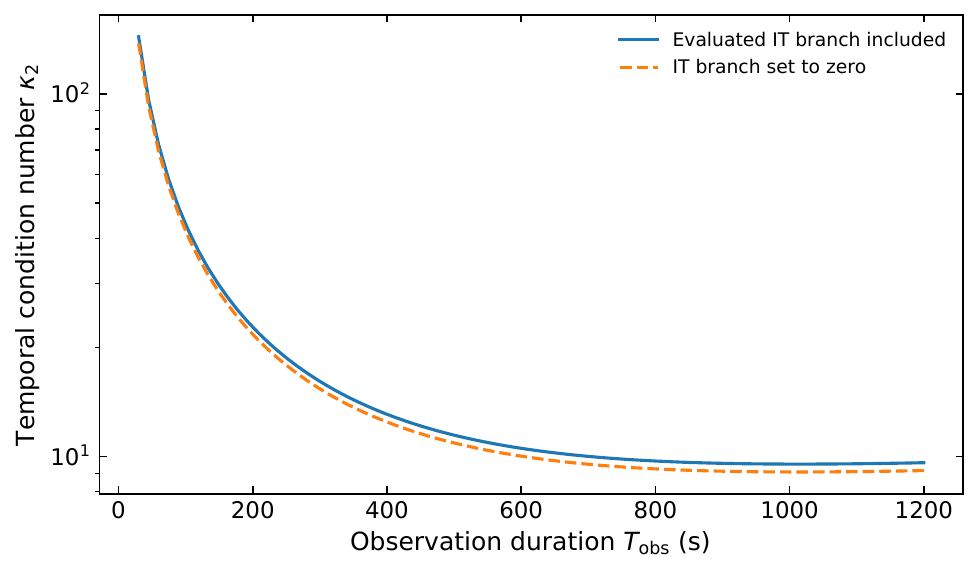}
\caption{Condition number of the column-normalized temporal mixing matrix for six equal observation windows and equal initial $^{90}$Rb ground-state and isomer inventories. Central evaluated half-lives are used. Curves with and without the internal-transition branch compare timing-shape distinguishability only; they do not include counting noise, half-life uncertainty, backgrounds, or collection history.}\label{fig:time}
\end{figure}

A lower unweighted condition number is not, by itself, an optimal acquisition criterion. The number of retained events, late-time backgrounds, the widths of the bins, and covariance of the timing parameters also matter. With a fixed number of ever wider windows, an indefinitely long observation interval can eventually place almost all activity in the first bin. The finite-duration example therefore provides no general prescription to maximize observation time. Its role is to demonstrate how actual half-lives can be incorporated into an identifiability assessment before attempting an unconstrained spectral separation.

\section{Implications for experimental analysis}\label{sec:discussion}
\subsection{What the available observations can establish}
An unresolved-parent sample can be analyzed as a single observed ensemble. Whether a standard extraction has a defensible physical interpretation then depends on the tested decay kernels, the first-generation assumption, and the definition of the inferred quantities. Known global fractions supply useful information but do not replace energy-dependent feeding or spin-parity constraints. Conversely, unknown local fractions need not invalidate a standard analysis in a verified population-invariant regime.

Table~\ref{tab:scope} summarizes the information provided by several data combinations. The distinctions are structural: adding a strong prior can produce a narrow parameter interval without making the corresponding quantity identifiable from the data alone. In practice, sensitivity to plausible population and response alternatives should be reported separately from conditional statistical errors.

\begin{table}[tb]
\centering\small
\caption{Information and remaining conditions for representative observation sets. “Constraint” does not imply absolute normalization or uniqueness outside the specified model.}\label{tab:scope}
\begin{tabularx}{\linewidth}{@{}>{\raggedright\arraybackslash}p{0.25\linewidth} Y Y@{}}
\toprule
Observations & Direct constraint & Essential remaining issue\\
\midrule
Mixed primary matrix and integrated parent fraction & Observed ensemble and compatibility with a factorized family & Local feeding, accessible level space, and Oslo normalization\\
\addlinespace
Known $q_{\rm mix}(E_i,J^\pi)$ and a primary matrix & Population-aware branching model & Unobserved channels and residual parameter null directions\\
\addlinespace
Matched-state Shape ratios & Selected multipolarity-weighted strength ratios & Weight stability, ratio connectivity, and normalization\\
\addlinespace
Several time windows or tagged population samples & Independent parent-mixture combinations & Temporal rank, statistical conditioning, and response stability\\
\addlinespace
Complete cascade and response model plus external anchors & Joint constraints on specified daughter properties & Model misspecification and independent closure validation\\
\bottomrule
\end{tabularx}
\end{table}

\subsection{A population-aware forward likelihood}
A consistent extension of standard analysis can use shared daughter parameters $\theta=\{\rho(E,J,\pi),f_{XL}(E_\gamma),\ldots\}$ and parent or timing nuisance parameters $\psi$. Schematically, the expected counts are
\begin{equation}
\mu_{\ell k}(\theta,\psi)=
\sum_a h_{\ell a}(\psi)
\sum_\nu B_a(\nu|\psi)
\mathcal R_{\ell k}\!\left[\mathcal C_\nu(\theta);\psi\right]
+\mu^{\rm bg}_{\ell k}(\psi),\label{eq:joint}
\end{equation}
where $\mathcal C_\nu$ is the full event-level cascade distribution and $\mathcal R$ maps it to the chosen detector observable. Absolute amplitudes, multiplicities, and sum-energy migration can then be retained. This is the natural mixed-parent extension of population-dependent forward modeling~\cite{Zeiser2019}.

Independent Poisson factors are justified for disjoint event categories, not automatically for gamma-hit bins that can receive several entries from one cascade. An event likelihood, a compound-count description, or a validated covariance model is needed in the latter case. Shape diagonals and the primary matrix should be derived jointly from this model or compared with their cross-covariance. Parameter identifiability can be assessed from profile likelihoods or the rank of the appropriately weighted local sensitivity matrix after known gauge directions have been removed.

The data do not require that both parents have identical \emph{effective} Oslo outputs. They require that one coherent daughter model explain all parent selections, time windows, and retained observables. Failure can arise from the daughter-strength parametrization, beta feeding, inadequate statistical averaging, or the response model. Attributing every discrepancy to isomer contamination would conceal these alternatives.

\subsection{Validation and reporting boundaries}
A useful validation sequence begins with the true primary distribution, proceeds to complete cascades and the actual first-generation implementation, and finally includes the measured detector response and acquisition conditions. Simulating and fitting with the same model checks implementation; generating with alternative feeding or strength models tests robustness. Event statistics should be held fixed when comparing mixtures, so that a change in precision is not confused with a change in population bias.

If high-purity or tagged parent samples exist, a blind mixture constructed at the event level tests the handling of known proportions, windows, and correlations. Agreement with separate analyses is valuable but does not establish absolute correctness, since all analyses can share a population or normalization bias. Synthetic truth tests and external anchors remain complementary.

A report should identify the parent-fraction definition, accepted energy support, nuclear-data constraints, treatment of first-forbidden feeding, normalization anchors, and the distinction between total, partial, and effective quantities. For Shape analysis it should also specify the discrete levels, contamination treatment, assumed multipolarities, and the interpolation or smoothness information used to connect ratio chains. These definitions make the result reusable for a subsequent reaction calculation or a comparison with another population mechanism.

\section{Conclusions}\label{sec:conclusions}
Unresolved beta-decaying parent states feeding a common daughter can be treated in a single forward model. The mixed primary spectrum is a population average of normalized branching probabilities. Integrated parent fractions become excitation-dependent through beta feeding and are further modified by event and gamma-energy selection. Averaging unnormalized decay kernels and normalizing only afterwards generally changes the physical ensemble.

Population invariance, conventional Oslo factorization, and identification of the total level density are separate properties. A positive-matrix log-linear test establishes exact factorization compatibility, while an explicit accessible-spin example shows that perfect compatibility can coexist with an effective NLD different from the total density. Normalization freedoms and unobserved spin-parity sectors remain even for a known mixture. A separate cascade-replacement identity isolates the population assumption entering first-generation subtraction.

For the Shape method, identical final-state spins and parities align selection windows but do not automatically remove E1/M1 weighting. Constant coefficients define a fixed effective strength; excitation-dependent coefficients require additional scrutiny. Ratio-graph connectivity determines the information available before interpolation or other shape assumptions are imposed. In particular, one fixed final-state separation admits a periodic log-strength ambiguity in the raw ratios. Additional independent edges can both connect energy chains and test consistency.

The deterministic benchmarks verify these statements and demonstrate that small primary-matrix residuals can accompany appreciable errors in truth-referenced strength and density shapes. Evaluated $^{90}$Rb and $^{90}$Sr properties define a relevant mixed-parent application and a candidate matched $2^+$ pair. The half-life difference provides an additional temporal observable, subject to conditioning and counting precision. Quantitative extraction from an actual $^{90}$Rb measurement requires its beta-feeding constraints, response, and full analysis pipeline; the present results supply the tests and definitions needed to interpret that extraction.

\section*{Data and code availability}
The source package contains the executable Python benchmark implementation, machine-readable input definitions and results, and the arrays used to generate all figures in the \texttt{anc/} directory. The numerical results can be regenerated without experimental data or external nuclear-reaction software. Evaluated nuclear quantities are identified in Table~\ref{tab:nuclear} and Refs.~\cite{ENSDFRb90,ENSDFSr90,ENSDFBeta90}.

\appendix
\section{Additional normalization and ratio identities}\label{app:identities}
\subsection{Mixing conditioned spectra}
Let $C_a(i,j)=N_a(i)P_a(j|i)$ denote the expected true primary counts before a gamma cut. The mixed counts satisfy $C=\sum_aC_a$. Normalization over a restricted set $\W_i$ gives
\begin{equation}
\frac{C(i,j)}{\sum_{k\in\W_i}C(i,k)}
=\sum_a
\frac{N_a(i)A_a(i)}{\sum_b N_b(i)A_b(i)}
\frac{P_a(j|i)}{A_a(i)},
\end{equation}
which proves Eq.~\eqref{eq:acceptedw}. A positive, known energy-dependent efficiency can be included in $A_a$ and the accepted conditional spectrum. Energy migration requires summation over the contributing true initial bins as well, and therefore cannot generally be represented by a fixed true-row weight.

\subsection{The log-design gauge}
The transformation
\begin{equation}
r_f\mapsto r_f+a+\alpha E_f,\quad
 t_j\mapsto t_j+b+\alpha E_{\gamma,j},\quad
 c_i\mapsto c_i-a-b-\alpha E_i
\end{equation}
leaves $X\theta$ invariant because $E_f+E_\gamma=E_i$. These provide three null vectors on an adequately connected support. Restricted or disconnected supports may have additional null vectors, so the rank should be evaluated for the actual energy cuts rather than presumed. The rank stated in Sec.~\ref{sec:fit} is computed from the specified 297-cell support using a relative singular-value threshold of $10^{-12}$.

\subsection{Ratio consistency and nullspace}
Choose one vertex in each connected component of the ratio graph. If every cycle sum vanishes, assign its log-strength value arbitrarily and integrate edge differences along a path to every other vertex. Two alternative paths form a cycle, so the result is path independent. This constructs $z$ with $Bz=y$ and proves sufficiency of the cycle condition. Necessity follows from telescoping the vertex differences. The remaining freedom is a constant on each component.

For the continuum fixed-separation problem, if two positive functions have identical ratios, the difference of their logarithms satisfies $h(x+\Delta)-h(x)=0$ wherever both arguments are measured. Conversely, that equality immediately preserves every ratio. On a finite domain the freedom is a function on one set of representatives of the $\Delta$-separated chains, propagated only over the observed range. Assuming a low-dimensional smooth parametrization may remove this freedom, but the resulting information then includes that parametrization.

\section{Time-dependent activities and window integrals}\label{app:time}
For distinct $\lambda_g$ and $\lambda_m$, the post-collection populations are
\begin{align}
N_m(t)&=N_m(0)e^{-\lambda_mt},\\
N_g(t)&=N_g(0)e^{-\lambda_gt}
+b_{\rm IT}\lambda_mN_m(0)
\frac{e^{-\lambda_mt}-e^{-\lambda_gt}}{\lambda_g-\lambda_m}.\label{eq:bateman}
\end{align}
For a window $[u,v]$ with unit live time, define $D_a(u,v)=e^{-\lambda_au}-e^{-\lambda_av}$. The beta counts contributed by each parent state are
\begin{align}
h_g(u,v)&=N_g(0)D_g
+\frac{b_{\rm IT}\lambda_mN_m(0)}{\lambda_g-\lambda_m}
\left[\frac{\lambda_g}{\lambda_m}D_m-D_g\right],\label{eq:hg}\\
h_m(u,v)&=(1-b_{\rm IT})N_m(0)D_m.\label{eq:hm}
\end{align}
These expressions separate the state from which beta decay occurs from the state initially implanted. The near-degenerate limit should be evaluated by a stable limiting expression rather than direct subtraction of nearly equal exponentials. Nonuniform live time requires the integrals in Eq.~\eqref{eq:timeC}; an implantation source can be convolved with Eq.~\eqref{eq:bateman}.

For Fig.~\ref{fig:time}, $N_g(0)=N_m(0)=1/2$, $\lambda_g=\log2/(158\,\mathrm{s})$, and $\lambda_m=\log2/(258\,\mathrm{s})$. Six equally spaced windows partition $[0,T_{\rm obs}]$. With $\widetilde H_{\ell a}=h_{\ell a}/\sum_k h_{ka}$, the plotted quantity is the ratio of the largest and smallest singular values of $\widetilde H$. The branch-free comparison sets $b_{\rm IT}=0$ and consequently makes the isomer beta branch unity. This column normalization is diagnostic only and is not applied to measured counts before inference.

\section{Computational implementation and reproduction}\label{app:compute}
\subsection{Cascade recursion and numerical checks}
The ancillary program stores the normalized transition tensor $b_{is,ft}$ for $f<i$. Its exact expected primary and cascade spectra obey
\begin{align}
P_{is,j}&=\sum_{f,t}b_{is,ft}\,\delta_{j,i-f},\\
H_{is,j}&=P_{is,j}+\sum_{f,t}b_{is,ft}H_{ft,j},
\end{align}
with $H_{0s,j}=0$ at the absorbing manifold. Since the energy grid is ordered, a single ascending pass gives the complete expectation without sampling. The checks require branch and primary sums to agree with unity within $10^{-12}$ and the total gamma energy to equal $E_i-1.0$ MeV within $10^{-11}$ MeV. The direct and independently evaluated forms of Eq.~\eqref{eq:oracle} agree within $2\times10^{-12}$.

In the allowed-only model, the program verifies that first transitions from populated $J_i\leq4$ have zero weight into $J_f\geq6$. It also checks the two-bin normalization counterexample, the total-variation identity, the Shape graph ranks, and the periodic-ratio invariance. No random seed is required because none of these computations uses random draws.

\subsection{Fit objective and gauge alignment}
For a cell $(i,j)$ the fitting score is $u_{ij}=r_{i-j}+t_j$ and the conditional prediction is the row softmax of $u$. The gradient of Eq.~\eqref{eq:klfit} is the mean difference between predicted and input sufficient statistics; its Hessian is their predicted covariance. Three independent coordinates are fixed during optimization. A small numerical diagonal stabilization of $10^{-14}$ is used when solving the Newton system, together with a backtracking line search. The supplied result file records the final gradients and iteration counts.

Let $v$ concatenate the fitted log differences from the synthetic input NLD and transmission before comparison. Define a matrix $G$ with three columns: an NLD-only constant, a transmission-only constant, and the combined energy vector $(E_f,E_\gamma)$. The plotted and tabulated shape errors are
\begin{equation}
v_{\perp}=v-G(G^{\mathsf T}G)^{-1}G^{\mathsf T}v.
\end{equation}
The implementation uses a least-squares solve rather than explicitly forming the inverse. All covered bins enter with equal weight. For pure-ground-state model S, part of the energy dependence of the accessible density can consequently be assigned to a compensating transmission slope. The nonzero individual $\epsilon_{\T}$ in that case must be read with this convention; the invariant statement is that the pair of fitted functions cannot be made identical to the total-density input and intrinsic transmission simultaneously by the three standard gauge changes.

\subsection{Reproduction commands and archived outputs}
From the source-package root, the commands are
\begin{verbatim}
python anc/benchmarks.py --compute
python anc/benchmarks.py --plots
pdflatex main.tex
bibtex main
pdflatex main.tex
pdflatex main.tex
\end{verbatim}
The archived computation used Python 3.13.5, NumPy 2.3.5, SciPy 1.17.0, and Matplotlib 3.10.8. The code requires only these Python packages and standard-library modules. The precomputed vector figures and bibliography are included, so Python is not needed to compile the manuscript. The machine-readable outputs comprise \texttt{results.json}, \texttt{fit\_metrics.csv}, and \texttt{benchmark\_arrays.npz}. They contain all 33 primary-fit cases and the independent normalization, graph, Shape, and temporal checks.

The manuscript is compiled with PDFLaTeX using standard TeX Live packages. No shell-escape graphics conversion, external font file, network resource, or downloaded dataset is required during compilation. Evaluated nuclear values used by the tests are explicit constants in the code and are documented in the main text.

\bibliographystyle{unsrtnat}
\bibliography{references}

@article{Schiller2000,
  author = {Schiller, A. and Bergholt, L. and Guttormsen, M. and Melby, E. and Rekstad, J. and Siem, S.},
  title = {Extraction of level density and {$\gamma$} strength function from primary {$\gamma$} spectra},
  journal = {Nucl. Instrum. Methods Phys. Res. A}, volume = {447}, pages = {498--511}, year = {2000},
  doi = {10.1016/S0168-9002(99)01187-0}, note = {arXiv:nucl-ex/9910009}
}

@article{Guttormsen1987,
  author = {Guttormsen, M. and Rams{\o}y, T. and Rekstad, J.},
  title = {The first generation of {$\gamma$}-rays from hot nuclei},
  journal = {Nucl. Instrum. Methods Phys. Res. A}, volume = {255}, pages = {518--523}, year = {1987},
  doi = {10.1016/0168-9002(87)91221-6}
}

@article{Spyrou2014,
  author = {Spyrou, A. and Liddick, S. N. and Larsen, A. C. and Guttormsen, M. and Cooper, K. and Dombos, A. C. and others},
  title = {Novel technique for constraining {$r$}-process {$(n,\gamma)$} reaction rates},
  journal = {Phys. Rev. Lett.}, volume = {113}, pages = {232502}, year = {2014},
  doi = {10.1103/PhysRevLett.113.232502}
}

@article{Larsen2011,
  author = {Larsen, A. C. and Guttormsen, M. and Krti{\v c}ka, M. and B{\v e}t{\'a}k, E. and B{\"u}rger, A. and G{\"o}rgen, A. and others},
  title = {Analysis of possible systematic errors in the {Oslo} method},
  journal = {Phys. Rev. C}, volume = {83}, pages = {034315}, year = {2011},
  doi = {10.1103/PhysRevC.83.034315}, note = {arXiv:1211.6264}
}

@article{Zeiser2019,
  author = {Zeiser, F. and Tveten, G. M. and Potel, G. and Larsen, A. C. and Guttormsen, M. and Laplace, T. A. and others},
  title = {Restricted spin-range correction in the {Oslo} method: The example of nuclear level density and {$\gamma$}-ray strength function from {$^{239}\mathrm{Pu}(d,p\gamma)^{240}\mathrm{Pu}$}},
  journal = {Phys. Rev. C}, volume = {100}, pages = {024305}, year = {2019},
  doi = {10.1103/PhysRevC.100.024305}, note = {arXiv:1904.02932}
}

@article{Wiedeking2021,
  author = {Wiedeking, M. and Guttormsen, M. and Larsen, A. C. and Zeiser, F. and G{\"o}rgen, A. and Liddick, S. N. and M{\"u}cher, D. and Siem, S. and Spyrou, A.},
  title = {Independent normalization for {$\gamma$}-ray strength functions: The {Shape} method},
  journal = {Phys. Rev. C}, volume = {104}, pages = {014311}, year = {2021},
  doi = {10.1103/PhysRevC.104.014311}, note = {arXiv:2010.15696}
}

@article{Muecher2023,
  author = {M{\"u}cher, D. and Spyrou, A. and Wiedeking, M. and Guttormsen, M. and Larsen, A. C. and Zeiser, F. and others},
  title = {Extracting model-independent nuclear level densities away from stability},
  journal = {Phys. Rev. C}, volume = {107}, pages = {L011602}, year = {2023},
  doi = {10.1103/PhysRevC.107.L011602}
}

@article{Ronning2026,
  author = {Ronning, E. K. and Richard, A. L. and Liddick, S. N. and Spyrou, A. and others},
  title = {Magnetic character of the low-energy enhancement in {$^{70}$Zn}},
  journal = {Nature}, volume = {655}, pages = {875--878}, year = {2026},
  doi = {10.1038/s41586-026-10758-3}
}

@article{Midtbo2021,
  author = {Midtb{\o}, J. E. and Zeiser, F. and Lima, E. and Larsen, A.-C. and Tveten, G. M. and Guttormsen, M. and Bello Garrote, F. L. and Kvellestad, A. and Renstr{\o}m, T.},
  title = {A new software implementation of the {Oslo} method with rigorous statistical uncertainty propagation},
  journal = {Comput. Phys. Commun.}, volume = {262}, pages = {107795}, year = {2021},
  doi = {10.1016/j.cpc.2020.107795}, note = {arXiv:1904.13248}
}

@article{Kirsch2018,
  author = {Kirsch, L. E. and Bernstein, L. A.},
  title = {{RAINIER}: A simulation tool for distributions of excited nuclear states and cascade fluctuations},
  journal = {Nucl. Instrum. Methods Phys. Res. A}, volume = {892}, pages = {30--40}, year = {2018},
  doi = {10.1016/j.nima.2018.02.096}, note = {arXiv:1709.04006}
}

@article{Porter1956,
  author = {Porter, C. E. and Thomas, R. G.},
  title = {Fluctuations of nuclear reaction widths},
  journal = {Phys. Rev.}, volume = {104}, pages = {483--491}, year = {1956},
  doi = {10.1103/PhysRev.104.483}
}

@misc{ENSDFRb90,
  author = {{National Nuclear Data Center}},
  title = {{ENSDF}: {$^{90}$Rb} adopted levels, gammas},
  howpublished = {Evaluation by S. K. Basu and E. A. McCutchan, March 2020; Nuclear Data Sheets 165, 1 (2020)},
  year = {2020},
  url = {https://www.nndc.bnl.gov/ensdf/getadopted.jsp?nuc=90Rb},
  note = {Accessed 20 September 2026}
}

@misc{ENSDFSr90,
  author = {{National Nuclear Data Center}},
  title = {{ENSDF}: {$^{90}$Sr} adopted levels, gammas},
  howpublished = {Evaluation by S. K. Basu and E. A. McCutchan, March 2020; Nuclear Data Sheets 165, 1 (2020)},
  year = {2020},
  url = {https://www.nndc.bnl.gov/ensnds/90/Sr/adopted.pdf},
  note = {Accessed 20 September 2026}
}

@misc{ENSDFBeta90,
  author = {{National Nuclear Data Center}},
  title = {{ENSDF}: {$^{90}$Rb} {$\beta^-$} decay (158 s), daughter {$^{90}$Sr}},
  howpublished = {Evaluated decay dataset, NuDat 3},
  year = {2020},
  url = {https://www.nndc.bnl.gov/nudat3/getdecaydataset.jsp?dsid=90rb+bM+decay+(158+s)&nucleus=90SR},
  note = {Accessed 20 September 2026}
}
\end{document}